\newcommand\bigforall{\mbox{\huge $\forall$}} 
\newenvironment{proof}{\noindent\textbf{Proof.\ }}{\hspace*{\fill}$\blacksquare$\medskip}
\newtheorem{theorem}{Theorem}
\newcommand{\tmop}[1]{\ensuremath{\operatorname{#1}}}
\newcommand{\tmtexttt}[1]{{\ttfamily{#1}}}
\begin{document}

\title{Charge-density-wave phases of the generalized $t$-$V$ model}

\author{M.\ Szyniszewski}

\email[Contact e-mail: ]{mszynisz@gmail.com}

\affiliation{Physics Department, Lancaster University, Lancaster LA1 4YB, UK}

\affiliation{NoWNano DTC, University of Manchester, Manchester M13 9PL, UK}

\date{\today}

\begin{abstract}
  The one-dimensional extended $t$-$V$ model of fermions on a lattice is a model with
	repulsive interactions of finite range that exhibits a transition between a Luttinger liquid
  conducting phase and a~Mott insulating phase. It is known 
	that by tailoring the potential energy of the insulating system, one can force a phase transition
	into another insulating phase. We show how to construct all possible 
	charge-density-wave phases of the system at low critical densities in the atomic
  limit. Higher critical densities are investigated by a brute-force analysis
  of the possible finite unit cells of the Fock states.
\end{abstract}

\maketitle

\section{Introduction}

Low-dimensional quantum systems exhibit unique electronic properties impossible to realize in bulk. A proper theoretical understanding of these materials could allow us to use them in a variety of novel electronic devices. Many low-dimensional systems show interesting behavior, such as the presence of phases that cannot be explained using classical theory.{\cite{Giamarchi2003}} One example of such phases is a Mott insulator, which has applications ranging from high-temperature superconductors {\cite{Kohsaka2008}} to a new type of energy-efficient field effect transistor with fast switching times.{\cite{Newns1998}} The research into the subject of one- and two-dimensional Mott transistors is currently ongoing.{\cite{Inoue2008,Son2011,Nakano2012}} However, to make an efficient Mott insulating device we first need an accurate description of the possible phase diagram of the system. 

All one-dimensional quantum liquids fall into the universality class of a Luttinger liquid,{\cite{Giamarchi2003, Haldane1981, Cazalilla2003, Cazalilla2011}} which is a concept similar to the Fermi liquid in higher dimensions. The theory of Luttinger liquids has already proven to be applicable in experiments dealing with carbon nanotubes,{\cite{Bockrath1998}} the fractional quantum Hall effect,{\cite{Chang1996, Chang2003}} and crystals of trapped ions.{\cite{Schneider2012}}

The extended $t$-$V$ model{\cite{GomezSantos1993}} is a perfect subject for 
investigation, exhibiting both a conducting Luttinger liquid phase and a Mott 
insulator phase. Recent studies {\cite{Schmitteckert2004, Mishra2011,
Dalmonte2015, Venuti2015}} have shown that one can achieve multiple insulating quantum 
phases in this system and thus there is a question of how to provide a 
systematic description of these insulators when the interaction range is 
varied.

In this article we investigate charge-density-wave (CDW) insulating phases 
that are present in a system with specific potential energy. 
The outline of this work is as follows: in Sec.\ \ref{model} we present 
a generalization of the $t$-$V$ model and describe its known behavior; then in Sec.\ \ref{analytic} we 
investigate CDW phases at low critical densities, while higher critical densities are 
analyzed in Sec.\ \ref{numeric} using brute-force sampling of the partial basis; finally, in Section\ \ref{summary} we
conclude and discuss possible future work.

\section{The generalized $t$-$V$ model} \label{model}

The Hamiltonian of the generalized $t$-$V$ model of spinless fermions in
a one-dimensional periodic chain of length $L$ is given by {\cite{GomezSantos1993,Szyniszewski2015}}
\begin{equation}
  \hat{\mathcal{H}} = - t \sum_{i = 1}^L \left( \hat{c}^{\dag}_i \hat{c}_{i +
  1} + \text{h.c.} \right) + \sum_{i = 1}^L \sum_{m = 1}^p U_m \hat{n}_i
  \hat{n}_{i + m},
\end{equation}
where $\hat{c}_i$ and $\hat{c}^{\dag}_i$ are the fermionic annihilation and
creation operators on site $i$, $\hat{n}_i = \hat{c}_i^{\dag} \hat{c}_i$ is
the particle number operator, $p$ is the maximum range of the interaction, $t$ is
the strength of the kinetic energy, and $U_m\geq 0$ is the potential between two fermions
that are $m$ sites apart from each other (for $m>p$, $U_m=0$). Notice that the 
potential part is diagonal in the basis consisting of Fock states.

The hopping part is assumed to have much smaller strength than the potential part, i.e., $\forall_{m\le p}
\ t \ll U_m$, which means that two fermions are likely to be more than $p$
sites away from each other. This also leads to a~critical Mott insulating
density $Q = \frac{1}{p + 1}$, where there is a huge energy penalty to moving
any one fermion. By shaking a system in the critical density, one
can create a CDW. G{\'o}mez-Santos {\cite{GomezSantos1993}}
introduces another very important assumption:
\begin{equation}
  \underset{m}{\bigforall}\ U_m < \frac{U_{m + 1} + U_{m - 1}}{2}.
  \label{condU}
\end{equation}
If the fermion-fermion distance is required to be less than $p$ sites (due
to high density in the system), then the particles will want to be as spread out as
possible. One can for example consider two similar systems,
both in Fock states,
which are different only by fermion chains: $(\cdots\nobreak\bullet\nobreak\circ\nobreak\circ\nobreak
\bullet\nobreak\circ\nobreak\circ\nobreak\bullet\nobreak\cdots)$ and $(\cdots\nobreak
\bullet\nobreak\circ\nobreak\bullet\nobreak\circ\nobreak\circ\nobreak\circ
\nobreak\bullet\nobreak\cdots)$, where
$\bullet$ and $\circ$ denote occupied and empty sites respectively. Assumption
(\ref{condU}) tells us that the first system will always have lower energy 
regardless of the maximum range of interactions. 
Thus, there are critical densities $Q = \frac{1}{m}$, when $m = 1, 2,
\ldots, p + 1$, at which the system is insulating and has the following unperturbed $(t
\rightarrow 0)$ ground state:
\begin{equation}
  \bullet \underbracket[.4pt]{\circ \circ \cdots \circ}_{
	\mbox{\scriptsize
	$\begin{array}{c}
    1 / Q - 1 \\
		{\tmop{sites}}
  \end{array}$}
	} \bullet
  \underbracket[.4pt]{\circ \circ \cdots \circ}_{1 / Q - 1} \bullet
  \underbracket[.4pt]{\circ \circ \cdots \circ}_{1 / Q - 1} \cdots.
\end{equation}
and the energy density is $E Q / N = Q U_{1 / Q}$ if $m > 
\frac{p + 1}{2}$, where $N = \sum_i \hat{n}_i$ is the total number of particles in the system. By converting condition (\ref{condU}) into
\begin{equation}
  \underset{m}{\bigforall} \ \frac{U_{m + 1} + U_{m - 1} - 2 U_m}{a^2} > 0,
\end{equation}
where $a$ is the lattice constant, we can immediately see that this assumption
is a discrete version of the (continuous) inequality
{\vspace{3pt plus 1pt minus 1pt}
\begin{equation}
  U'' (r) > 0. \label{condUcont}
\end{equation}
}
One can easily check that assumption (\ref{condUcont}) holds for Coulomb and dipole
potentials. However, in principle, a potential that does not satisfy such a condition could
also be considered (such as the P{\"o}schl-Teller potential used in the description of ultracold atomic gases).

Models with interaction range $p = 2$ and $Q = 1 / 3$ and $1 / 2$ have already been
analyzed in Refs.\ {\onlinecite{Schmitteckert2004}} and {\onlinecite{Mishra2011}}. 
Depending on the strength of the potentials, one can have
different phases in the system: there can be multiple CDW insulating phases, 
a long-range bond-order phase, and even a Luttinger
liquid phase, despite the existence of a critical (``insulating'') density.

The objective of this work is to generalize this result for all interaction
ranges. However, to simplify the problem, we shall assume the atomic limit ($t
= 0$), in which the only phases that will be encountered are CDW insulators.

\section{Low critical densities in the atomic limit} \label{analytic}

\subsection{Critical density $Q = 1 / (p + 1)$}

In the trivial case of $Q = 1/(p+1)$, the ground-state energy is always equal to zero. The
ground-state configuration is
\begin{equation}
  \bullet \underbracket[.4pt]{\circ \circ \cdots \circ}_p \bullet
  \underbracket[.4pt]{\circ \circ \cdots \circ}_p \bullet \underbracket[.4pt]{\circ
  \circ \cdots \circ}_p \cdots.
\end{equation}
Such a ground state is $(p + 1)$-fold degenerate since the energy is invariant under translation.

\subsection{Critical density $Q = 1 / p$}

Let us now show how to construct the CDW phase for a system with any $p$ and
with critical density $Q = \frac{1}{p}$. Firstly, assume that $U_p$ 
is low enough to ensure that the preferable distance between two fermions 
is always $p$ and thus we can say that $U_p$ orders the fermions in the
ground state; for example a chain $\bullet\nobreak\underbracket[.4pt]{\circ \circ \cdots
\circ}_{p - 1}\nobreak\bullet\nobreak\underbracket[.4pt]{\circ \circ \cdots \circ}_{p -
1} \bullet$ has lower energy than $\bullet\nobreak\underbracket[.4pt]{\circ \circ \circ
\cdots \circ}_p\nobreak\bullet\nobreak\underbracket[.4pt]{\circ \cdots \circ}_{p -
2} \bullet$. The ground state must have the simple form
\begin{equation}
  \bullet \underbracket[.4pt]{\circ \circ \cdots \circ}_{p - 1} \bullet
  \underbracket[.4pt]{\circ \circ \cdots \circ}_{p - 1} \bullet
  \underbracket[.4pt]{\circ \circ \cdots \circ}_{p - 1} \cdots
\end{equation}
and its energy is $E_1 = \frac{L}{p} U_p = N U_p$.

Now, let us assume that $U_{p - 1}$ is low enough to order the fermions. We
could use a series of $\bullet\nobreak\underbracket[.4pt]{\circ \circ \cdots
\circ}_{p - 2}$ sections, but then we would not achieve the correct density
$1 / p$. However, by addition of sections $\bullet\nobreak\underbracket[.4pt]{\circ
\circ \cdots \circ}_p$ we can tailor the density without changing
the energy of the system. Thus, the ground-state configuration is
\begin{equation}
	\mbox{\footnotesize
  $\begin{array}{|l|}
    \hline
    \bullet \underbracket[.4pt]{\circ \circ \cdots \circ}_{p - 2} \bullet
    \underbracket[.4pt]{\circ \circ \cdots \circ}_p\\
    \hline
  \end{array} \begin{array}{|l|}
    \hline
    \bullet \underbracket[.4pt]{\circ \circ \cdots \circ}_{p - 2} \bullet
    \underbracket[.4pt]{\circ \circ \cdots \circ}_p\\
    \hline
  \end{array} \begin{array}{|l|}
    \hline
    \bullet \underbracket[.4pt]{\circ \circ \cdots \circ}_{p - 2} \bullet
    \underbracket[.4pt]{\circ \circ \cdots \circ}_p\\
    \hline
  \end{array} \cdots$
	},
\end{equation}
which gives us the correct density $Q = \frac{2}{2 p} = \frac{1}{p}$ and
energy $E_2 = \frac{L}{2 p} U_{p - 1} = \frac{N}{2} U_{p - 1}$. The boxes are
present to show that we have correctly counted the energy and particle density. In general,
however, the whole subspace of the unperturbed ground states would include Fock states 
in which sections with $p - 2$ holes could be beside each other, unless they would change 
the energy of the system.

If one follows this prescription, in the $n$-th step the following ground
state is obtained:
\begin{widetext}
\begin{equation}
  \begin{array}{|l|}
    \hline
    \bullet \underbracket[.4pt]{\circ \circ \cdots \circ}_{p - n}
    \overbracket[.4pt]{\bullet \underbracket[.4pt]{\circ \circ \cdots \circ}_p
    \bullet \underbracket[.4pt]{\circ \circ \cdots \circ}_p \bullet
    \underbracket[.4pt]{\circ \circ \cdots \circ}_p \cdots}^{n - 1
    \tmop{times}}\\
    \hline
  \end{array} \begin{array}{|l|}
    \hline
    \bullet \underbracket[.4pt]{\circ \circ \cdots \circ}_{p - n}
    \overbracket[.4pt]{\bullet \underbracket[.4pt]{\circ \circ \cdots \circ}_p
    \bullet \underbracket[.4pt]{\circ \circ \cdots \circ}_p \bullet
    \underbracket[.4pt]{\circ \circ \cdots \circ}_p \cdots}^{n - 1
    \tmop{times}}\\
    \hline
  \end{array} \cdots,
\end{equation}
\end{widetext}
with the energy $E_n = \frac{L}{1 + p - n + (n - 1) (p + 1)} U_{p + 1 - n} =
\frac{L}{n p} U_{p - n + 1} = \frac{N}{n} U_{p - n + 1}$. We can now calculate 
the exact conditions in which an arbitrary phase (designated by step $n$) will be dominant in the system:
\begin{equation}
\begin{split}
	\underset{k \neq n}{\bigforall}\ &E_n < E_k \\
	\Rightarrow \quad \underset{k \neq n}{\bigforall}\ &U_{p - n + 1} < \frac{n}{k} U_{p - k + 1}.
	\end{split}
\end{equation}
Renaming $\alpha = p - n + 1$ and $\beta = p - k + 1$,
\begin{equation}
  \underset{\beta \neq \alpha}{\bigforall}\ U_{\alpha} < \frac{p - \alpha +
  1}{p - \beta + 1} U_{\beta}.
\end{equation}
If this condition is fulfilled, then the phase with energy $E_{(\alpha)} =
\frac{N}{p - \alpha + 1} U_{\alpha}$ is dominant and the ground state
consists of $\frac{N}{p - \alpha + 1}$ blocks of $\bullet\nobreak\underbracket[.4pt]{\circ
\circ \cdots \circ}_{\alpha - 1}$ and $N \frac{p - \alpha}{p -
\alpha + 1}$ blocks of $\bullet\nobreak\underbracket[.4pt]{\circ \circ \cdots
\circ}_p$ and is $f$-fold degenerate, where:
\begin{equation}
  f = {\left\{ \begin{array}{ll}
    \left( \begin{array}{c}
      N\\
      N / (p - \alpha + 1)
    \end{array} \right) \cdot p & \tmop{if}\ 2 \alpha > p\\
    \left( \begin{array}{c}
      N \frac{p - \alpha}{p - \alpha + 1}\\
      N / (p - \alpha + 1)
    \end{array} \right) \cdot \frac{p (p - \alpha + 1)}{p - \alpha} &
    \tmop{otherwise}.
  \end{array} \right.}
\end{equation}
For $2 \alpha \leqslant p$ the problem with assessing the degeneracy is that
we need to exclude cases where blocks of $\bullet\nobreak\underbracket[.4pt]{\circ \circ
\cdots \circ}_{\alpha - 1}$ are too close to each other and thus
would increase the energy by $U_{2 \alpha}$.

\section{Higher critical densities in the atomic limit} \label{numeric}

For Mott insulators with $Q = \frac{1}{m}$, where $m = 1, \ldots, p - 1$, the number
of phases and their energies were found to be more difficult to obtain.
Rather than constructing the phases as done in Section \ref{analytic}, we shall
use a brute-force analysis of the basis for systems of finite size. 
Nevertheless, because we are interested in the thermodynamic limit, 
a periodic system of $L$ sites can be 
thought of as an infinite system with a~unit cell of $L$ sites.

\subsection{Properties of the system}

Sampling the full basis in systems with $Q > 1 / p$ is problematic, because
the dimension of the basis grows rapidly with a system size. However,
many of the Fock states will have the same energy. In particular, if two states
are cyclic permutations of each other, or cyclic permutations with inversion,
then such states must have the same energy due to the periodicity of the system.
Checking the full basis for cyclic permutations would still be computationally
quite a difficult task: firstly, because generating the full basis would take a
lot of memory, and secondly, because comparing all the states to check if they
are cyclic permutations would require a large computational time [of 
$O(2^{2 L})$]. An alternative approach to this problem is to consider 
the spaces between the fermions in our chain and to develop rules to generate 
a set of Fock states that will always contain ground states of the system.

\begin{theorem}
  \label{property1}For any basis state, the largest space between consecutive
  fermions must not be less than $1 / Q - 1$ sites.
\end{theorem}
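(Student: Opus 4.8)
The plan is to argue by contradiction using a simple counting (pigeonhole) argument together with the key ordering property established in condition~(\ref{condU}). Suppose, for some basis state, that the largest gap between consecutive fermions is at most $1/Q - 2$ sites. Since the state has $N = QL$ fermions arranged on a periodic ring of $L$ sites, the $N$ gaps between consecutive fermions must sum to exactly $L - N$ empty sites (equivalently, the distances between consecutive fermions sum to $L$). If every gap contains at most $1/Q - 2$ empty sites, then the total number of empty sites is at most $N(1/Q - 2) = QL(1/Q - 2) = L - 2QL = L - 2N$, which is strictly smaller than the required $L - N$ whenever $N > 0$. This contradiction shows that at least one gap must contain at least $1/Q - 1$ empty sites, i.e.\ the largest inter-fermion distance is at least $1/Q$, and the largest space (number of empty sites) is at least $1/Q - 1$.

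First I would fix notation: label the fermion positions cyclically and let $d_1, \dots, d_N$ be the distances between consecutive fermions (so each $d_j \geq 1$ and $\sum_j d_j = L$), with the corresponding empty-site counts $s_j = d_j - 1$ satisfying $\sum_j s_j = L - N$. Then I would state the pigeonhole step: $\max_j s_j \geq \frac{1}{N}\sum_j s_j = \frac{L-N}{N} = \frac{1}{Q} - 1$. Because $s_j$ is an integer and $1/Q$ is an integer at the critical densities under consideration, $\max_j s_j \geq \lceil 1/Q - 1 \rceil = 1/Q - 1$, which is exactly the claim. I would also remark that this holds for \emph{any} Fock state at density $Q$, not merely ground states, so it can be used to prune the basis before any energy comparison.

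One subtlety worth addressing explicitly is the role of condition~(\ref{condU}): although the bare counting already forces one large gap, the convexity assumption guarantees that pushing the configuration toward the \emph{uniform} spacing $d_j \equiv 1/Q$ lowers the potential energy, so the ground state itself saturates the bound with the minimal possible ``largest space'' $1/Q - 1$. This is the sense in which the theorem is sharp and why it is the right pruning rule for the brute-force search: states whose largest space is strictly smaller than $1/Q - 1$ are not merely non-generic, they are impossible, while states whose largest space is much larger than $1/Q - 1$ are energetically penalized by~(\ref{condU}) and can be handled by the complementary bound.

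The main obstacle is not the inequality itself — which is elementary — but making precise what ``$1/Q - 1$ sites'' means combinatorially (empty sites versus center-to-center distance) and confirming that at the relevant densities $Q = 1/m$ the quantity $1/Q - 1 = m - 1$ is a genuine integer so that no rounding slack is lost in the pigeonhole step. I would therefore be careful to phrase the statement in terms of the number of consecutive empty sites and to note the integrality of $1/Q$ up front, after which the proof is a two-line averaging argument.
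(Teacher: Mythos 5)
Your proof is correct and rests on the same underlying observation as the paper's: the average gap between consecutive fermions is exactly $1/Q-1$, so the largest gap cannot fall below it — the paper phrases this as "any deviation from the uniform configuration increases the largest space," while you make it rigorous via an explicit pigeonhole count. Your closing aside about condition~(\ref{condU}) is unnecessary (as you yourself note, the counting alone suffices, and the theorem applies to every Fock state, not just ground states), but it does not affect the validity of the argument.
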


\begin{proof}
  All spaces between consecutive fermions are equal only if all particles are
  $1 / Q - 1$ sites apart, i.e., the configuration is
  \begin{equation}
    \bullet \underbracket[.4pt]{\circ \circ \cdots \circ}_{1 / Q - 1}
    \bullet \underbracket[.4pt]{\circ \circ \cdots \circ}_{1 / Q - 1}
    \cdots \bullet \underbracket[.4pt]{\circ \circ \cdots \circ}_{1 / Q -
    1}.
  \end{equation}
  Any attempt to move a fermion would make the largest space bigger than
  $1 / Q - 1$.
\end{proof}

Thus, any state will have a space that is larger than or equal to $1 / Q - 1$.
Due to the system's periodicity, we can therefore fix the first $1 / Q$ sites to be
\begin{equation}
  \bullet \underbracket[.4pt]{\circ \circ \cdots \circ}_{1 / Q - 1}.
  \label{fixingsites}
\end{equation}
This leaves us with a smaller subspace of the full basis to generate: 
the system with size $(N - 1) / Q$ and $N - 1$ particles.

\begin{theorem}
  \label{property2}For any ground state of the system, the largest space must
  not exceed $p$ sites.
\end{theorem}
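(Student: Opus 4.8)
The plan is to argue by contradiction: suppose a ground state contains a space of $q > p$ empty sites between two consecutive fermions. The key observation is that such a large gap is ``wasteful'' — the fermion on one side of the gap interacts with nothing on the other side (since $q > p$ means they are out of range), so moving a fermion into the gap can only lower the energy, provided we can do so without creating a new, more costly close encounter elsewhere. Concretely, I would pick the fermion immediately to the left of the oversized gap and show that sliding it one site to the right (into the gap) cannot increase the total energy, because by assumption~(\ref{condU}) the potential is convex, so redistributing a gap of size $q > p$ and an adjacent gap of size $s$ into sizes $q-1$ and $s+1$ never raises the energy — and in fact strictly lowers it unless the move merely translates an identical configuration.

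The main steps are: (i) fix notation — label the spaces between consecutive fermions $s_1, s_2, \ldots, s_N$ around the ring, with $\sum_i (s_i + 1) = L$, and note the total potential energy is $\sum_i V(s_i) + (\text{cross terms from fermions within range across a gap})$; since the offending gap has $s_j = q > p$, the fermion bounding it on each side contributes no cross term through that gap. (ii) Show that the energy can be written, up to terms unaffected by the proposed move, as a sum of single-gap contributions $V(s)$ where $V$ is increasing and convex on the relevant range — this follows because each $U_m$ is nonnegative and, by~(\ref{condU}), $V$ inherits convexity. (iii) Perform the exchange $s_j \to s_j - 1$, $s_{j-1}\to s_{j-1}+1$ (borrowing one empty site from the big gap to the neighboring gap on the left), and invoke convexity: $V(q-1) + V(s_{j-1}+1) \le V(q) + V(s_{j-1})$, with equality only in degenerate cases. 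Iterating, we can shrink every gap to at most $p$ without raising the energy, contradicting the assumed minimality unless the original state already had all gaps $\le p$.

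The subtle point — and the main obstacle — is handling the cross terms carefully: when a gap $s_{j-1}$ is \emph{small} (so that the fermions flanking it see each other and also see fermions two gaps away), enlarging it by one site changes several $U_m$ contributions at once, and one must check that the net change is still governed by the convexity inequality~(\ref{condU}) rather than by some uncontrolled combination. The cleanest way around this is to always borrow the empty site for the gap that is \emph{largest among those adjacent to an oversized gap}, or more simply to move the bounding fermion all the way until the big gap has exactly $p$ sites in a single step and compare energies directly; then the only interactions that change are those of the moved fermion with its new neighbors, all of which are at distance $\le p$ on one side and unchanged on the other, and nonnegativity of the $U_m$ together with~(\ref{condU}) finishes it. I would also remark that Theorem~\ref{property1} guarantees at least one gap is already $\ge 1/Q - 1$, so the shrinking procedure terminates at a legitimate configuration of the correct density.
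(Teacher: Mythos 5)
There is a genuine gap: your argument leans on the convexity assumption (\ref{condU}), but Theorem \ref{property2} is stated for, and is needed for, arbitrary non-negative potentials $U_m$ --- the whole point of the brute-force analysis in this section is that (\ref{condU}) is \emph{not} assumed (most of the exotic unit cells in the tables only become ground states for non-convex $\{U_m\}$). Without convexity your elementary move fails: sliding the fermion that bounds the oversized gap one site (or $q-p$ sites) into the gap increases its distance to every fermion on its other side, and since $U_m$ need not be monotone this can raise the energy. Concretely, take $p=3$, $U_1=0$, $U_2$ large, $U_3=0$, and the local pattern $\bullet\,\bullet\,\circ\circ\circ\circ\,\bullet$: shrinking the gap of four holes to three turns a $U_1=0$ bond into a $U_2>0$ bond. (Under (\ref{condU}) together with $U_m\geq 0$ and $U_m=0$ for $m>p$, the first differences $U_{m+1}-U_m$ increase up to $U_{p+1}-U_p=-U_p\le 0$, so the potential is automatically non-increasing --- which is why your convexity step feels safe, but that is exactly the hypothesis the theorem must do without.) A secondary looseness: you mostly argue the energy does not increase, whereas to contradict ``any ground state has a gap exceeding $p$'' you need a strict decrease somewhere.

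The paper's proof sidesteps the redistribution problem entirely and uses only $U_m\geq 0$. It replicates the offending unit cell (\ref{untrueGS}) $p$ times, collects the $p$ surplus holes next to the existing $p$-hole gap at the end of the enlarged cell, and then transports a single fermion whose contribution $E_{\Delta}$ to the block energy is strictly positive into that empty region, where it sits more than $p$ sites from every other fermion and hence contributes nothing; the energy density drops by $E_{\Delta}/(pN/Q)>0$, giving the contradiction without ever comparing $U_m$ at different distances. If you want to keep a local-move strategy you would only be proving the theorem under (\ref{condU}), which is strictly weaker than what the paper requires.
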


\begin{proof}
  Assume that there exists a ground state unit cell with the largest space equal to $p + 1$
  sites. We can write it as
  \begin{equation}
    \underbracket[.4pt]{\bullet ? ? \cdots ? \bullet}_{\tmop{Block} A} 
    \underbracket[.4pt]{\circ \circ \cdots \circ}_p \circ .\label{untrueGS}
  \end{equation}
  Let $E_A$ be the energy of the block $A$, so that the
  energy density of this ground state is $\frac{E_A}{N / Q}$. Let us construct
  the following unit cell, which consists of $p$ consecutive ground-state unit cells
  ($\ref{untrueGS}$):
  \begin{equation}
		\mbox{\footnotesize ${\medmuskip=1mu 
    \underbracket[.4pt]{
		\begin{array}{|l|}
    \hline
		\underbracket[.4pt]{\bullet ? ? \cdots ? \bullet}_{\tmop{Block} A} 
    \underbracket[.4pt]{\circ \circ \cdots \circ}_p \circ \\
		\hline
		\end{array}\begin{array}{|l|}
    \hline
    \underbracket[.4pt]{\bullet ? ? \cdots ? \bullet}_{\tmop{Block} A} 
    \underbracket[.4pt]{\circ \circ \cdots \circ}_p \circ  \\
		\hline
		\end{array}
		\cdots
		\begin{array}{|l|}
    \hline
		\underbracket[.4pt]{\bullet ? ? \cdots ? \bullet}_{\tmop{Block} A} 
    \underbracket[.4pt]{\circ \circ \cdots \circ}_p \circ \\
		\hline
		\end{array}
		}_p
		}$}.
  \end{equation}
  This unit cell has the same energy density $\frac{p E_A}{p N / Q} =
  \frac{E_A}{N / Q}$ as the ground-state unit cell (\ref{untrueGS}). Let us now move the
  additional empty spaces to the end of this chain, which still does not change
  the energy density:
  \begin{equation}
		\mbox{\footnotesize ${\medmuskip=1mu 
		\begin{array}{|l|}
    \hline
    \underbracket[.4pt]{\bullet ? ? \cdots ? \bullet}_{\tmop{Block} A} 
    \underbracket[.4pt]{\circ \circ \cdots \circ}_p \\
		\hline
		\end{array}\begin{array}{|l|}
    \hline
    \underbracket[.4pt]{\bullet ? ? \cdots ? \bullet}_{\tmop{Block} A} 
    \underbracket[.4pt]{\circ \circ \cdots \circ}_p \\
		\hline
		\end{array}
		\cdots
		\begin{array}{|l|}
    \hline
    \underbracket[.4pt]{\bullet ? ? \cdots ? \bullet}_{\tmop{Block} A} 
    \underbracket[.4pt]{\circ \circ \cdots \circ}_p  \underbracket[.4pt]{\circ
    \circ \cdots \circ}_p  \\
		\hline
		\end{array}
		}$}.
		\label{untrue2}
  \end{equation}
  Now, let us assume that the last fermion in block $A$ contributes to the
  potential energy of this block by amount $E_{\Delta}$. If $E_{\Delta}=0$, we can
  always swap this last fermion with the rest of block $A$ and again consider the 
	last fermion of a new block. If we take this last fermion out and
	replace it with a hole, then the energy of the block $A$ will
  decrease by $E_{\Delta}$. Let us now move the last fermion in unit cell
  (\ref{untrue2}) by $p$ sites to the right:
  \begin{equation}
		\mbox{\footnotesize ${\medmuskip=1mu 
		\begin{array}{|l|}
    \hline
    \underbracket[.4pt]{\bullet ? ? \cdots ? \bullet}_{\tmop{Block} A} 
    \underbracket[.4pt]{\circ \circ \cdots \circ}_p \\
		\hline
		\end{array}\begin{array}{|l|}
    \hline
    \underbracket[.4pt]{\bullet ? ? \cdots ? \bullet}_{\tmop{Block} A} 
    \underbracket[.4pt]{\circ \circ \cdots \circ}_p \\
		\hline
		\end{array}
		\cdots
    \begin{array}{|l|}
    \hline
		\underbracket[.4pt]{\bullet ? ? \cdots ? \circ}_{\tmop{Block} A'} 
    \underbracket[.4pt]{\circ \cdots \circ}_{p - 1} \bullet
    \underbracket[.4pt]{\circ \circ \cdots \circ}_p \\
		\hline
		\end{array}
		}$},
  \end{equation}
  where block $A'$ is block $A$ with the last fermion replaced by a hole. Block $A'$
  has energy $E_A - E_{\Delta}$. The last fermion does not contribute now to the
  overall potential energy, because it is surrounded by $p$ sites on both
  sides. Such a unit cell now has energy density
  \begin{equation}
    \frac{p E_A - E_{\Delta}}{p N / Q} = \frac{E_A}{N / Q} -
    \frac{E_{\Delta}}{p N / Q},
  \end{equation}
  which is lower than the energy of the ground-state unit cell $\left( \ref{untrueGS}
  \right)$, and this leads to a contradiction. A similar process can be used to
  show that a ground state cannot have a space equal to $p + 2$ and more
  sites. Thus, we conclude that the largest space in any ground state must have 
	at most $p$ sites.
\end{proof}

Using Theorems \ref{property1} and \ref{property2}, we can significantly
decrease the number of generated states.

\subsection{Details of the calculations}

Our calculations were performed using Mathematica.{\cite{Mathematica}} Firstly a partial basis
for a specific number of particles $N$, density $Q$ and interaction range $p$ was generated.
States of this partial basis had the first $1 / Q$ sites fixed to the configuration shown in
Eq.\ (\ref{fixingsites}) by Theorem \ref{property1}, and any states that were
not in agreement with Theorem \ref{property2} were removed. 
Then, the energy density was
calculated\footnote{Instead of using loop statements to fill in tables, it was found
that the Mathematica's \tmtexttt{Table[]} function was more
efficient and the time consumption
scaled exactly exponentially with the system size.} for every state and this list
of energies was simplified by removing duplicates.
In order to discard the energies that cannot describe the ground state, the expression $\forall_\beta E_\alpha<E_\beta$ was
assessed\footnote{Mathematica's \tmtexttt{Reduce[]} function
was found to give the most reliable simplification results, however it also
needed much higher computational resources. \tmtexttt{Simplify[]} and
\tmtexttt{FullSimplify[]} were found to be quite similar in resource
consumption and the latter was chosen due to higher reliability for
simplifying complicated conditions.}. Some energies however could not be compared without
knowing the values of $\{ U_m \}$. The final list contains
the energies of all phases that have the lowest energy for some set values of $\{ U_m
\}$; these are the CDW phases of the system.

\subsection{Results for $Q = 1 / (p - 1)$}

Unit cells and energy densities for $p=3,4$, and $5$ are presented in Table\ \ref{tab:results1}. Due to the finite size of the systems studied, we can only look for CDW unit cells up to a specific size ($L_{\tmop{max}}$). Phase diagrams in Figure\ \ref{fig:phases} show what phases are expected to appear for different values of the potentials $\{ U_m \}$.

\begin{table}
\caption{Ground-state (GS) unit cells and their energies in a system with $Q = 1 / (p - 1)$. $f$ is the degeneracy of the ground state. $L_{\tmop{max}}$ is the maximum size of the unit cell that was analyzed. Colors designate phases shown in Fig.\ \ref{fig:phases} (color online). \label{tab:results1}}
\begin{center}
\begin{tabular}{lcccc}
	\hline \hline
	System & GS unit cell & Energy density & $f$ & \\
	\hline
	$p=3,$              & {\medmuskip=0mu $\bullet \circ$} & $\frac{1}{2} U_2$ & 2 & {\color[rgb]{0,0.666,1}$\blacksquare$}\\
	$Q=1/2,$            & {\medmuskip=0mu $\bullet \bullet \circ \circ$} & $\frac{1}{4} (U_1 + U_3)$ & 4 & {\color[rgb]{0.333,1,0}$\blacksquare$}\\
	$L_{\tmop{max}}=28$ & {\medmuskip=0mu $\bullet \bullet \bullet \circ \circ \circ$} & $\frac{1}{6} (2 U_1 + U_2)$ & 6 & {\color[rgb]{1,1,0}$\blacksquare$}\\
	\hline
	$p=4,$              & {\medmuskip=0mu $\bullet \circ \circ$} & $\frac{1}{3} U_3$ & $3$ & {\color[rgb]{0,0.666,1}$\blacksquare$}\\
	$Q=1/3,$            & {\medmuskip=0mu $\bullet \bullet \circ \circ \circ \circ$} & $\frac{1}{6} U_1$ & $6$ & {\color[rgb]{0.333,1,0}$\blacksquare$}\\
	$L_{\tmop{max}}=36$ & {\medmuskip=0mu $\bullet \circ \bullet \circ \circ \circ$} & $\frac{1}{6} (U_2 + U_4)$ & $6$ & {\color[rgb]{1,1,0}$\blacksquare$}\\
	& {\medmuskip=0mu $\bullet \bullet \circ \circ \circ \bullet \circ \circ \circ$} & $\frac{1}{9} (U_1 + 2 U_4)$ & $9$ & {\color[rgb]{1,0,0}$\blacksquare$}\\
	& {\medmuskip=0mu $\bullet \circ \bullet \circ \bullet \circ \circ \circ \circ$} & $\frac{1}{9} (2 U_2 + U_4)$ & $9$ & {\color[rgb]{0.5,0,0.5}$\blacksquare$}\\
	& {\medmuskip=0mu $\bullet \circ \bullet \circ \circ \bullet \circ \bullet \circ \circ \circ \circ$} & $\frac{1}{12} (2 U_2 + U_3)$ & $12$ & {\color[rgb]{1,0.5,0}$\blacksquare$}\\
	& {\medmuskip=0mu $\bullet \bullet \bullet \circ \circ \circ \circ \bullet \circ \bullet \circ \circ \circ \circ \bullet \circ \bullet \circ \circ \circ \circ$} & $\frac{1}{21} (2 U_1 + 3 U_2)$ & $21$ & {\color[rgb]{0,0,0}$\blacksquare$}\\
	\hline
	$p=5,$              & {\medmuskip=0mu $\bullet \circ \circ \circ$} & $\frac{1}{4} U_4$ & 4 \\
	$Q=1/4,$            & {\medmuskip=0mu $\bullet \circ \circ \bullet \circ \circ \circ \circ$} & $\frac{1}{8} (U_3 + U_5)$ & 8 \\
	$L_{\tmop{max}}=32$ & {\medmuskip=0mu $\bullet \circ \bullet \circ \circ \circ \circ \circ$} & $\frac{1}{8} U_2$ & 8 \\
	& {\medmuskip=0mu $\bullet \circ \bullet \circ \circ \circ \circ \bullet \circ \circ \circ \circ$} & $\frac{1}{12} (U_2 + 2 U_5)$ & 12 \\
	& {\medmuskip=0mu $\bullet \circ \circ \bullet \circ \circ \bullet \circ \circ \circ \circ \circ$} & $\frac{1}{12} 2 U_3$ & 12 \\
	& {\medmuskip=0mu $\bullet \bullet \circ \circ \circ \circ \bullet \circ \circ \circ \circ \bullet \circ \circ \circ \circ$} & $\frac{1}{16} (U_1 + 3 U_5)$ & 16 \\
	& {\medmuskip=0mu $\bullet \bullet \circ \circ \circ \circ \circ \bullet \bullet \circ \circ \circ \circ \circ \bullet \circ \circ \circ \circ \circ$} & $\frac{1}{20} 2 U_1$ & 20 \\
	\hline \hline
\end{tabular}
\end{center}
\end{table}

\begin{figure}
\begin{center}
\includegraphics[width=\columnwidth]{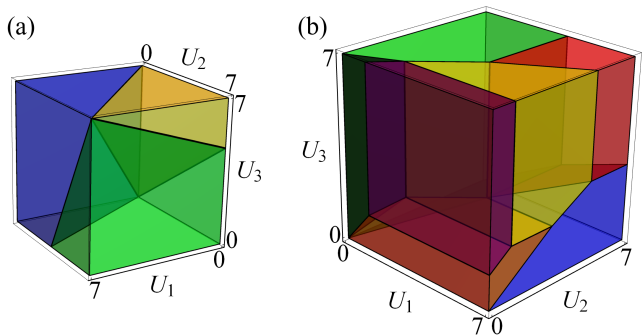}
\caption{(Color online) Phase diagrams of (a) $p = 3, Q = 1 / 2$ and (b) $p = 4, Q = 1 / 3$ with $U_4=1$. For color legend see Table\ \ref{tab:results1}. \label{fig:phases}}
\end{center}
\end{figure}

\subsection{Results for $Q = 1 / (p - 2)$}

Table\ \ref{tab:results2} presents the unit cells and energy densities for $p=4$ and $5$. Notice that for $p=5$, we have found ground-state unit cells up to $L_{\tmop{max}}$ and thus potentially there could be a ground state containing an even larger unit cell that was not found in our calculation.

\begin{table}
\caption{As Table\ \ref{tab:results1}, but for a system with density $Q = 1 / (p - 2)$.\label{tab:results2}}
\begin{center}
		\begin{tabular}{ccc}
			\hline \hline
			GS unit cell & Energy density & $f$\\
			\hline
			\multicolumn{3}{c}{$p = 4, Q = 1/2, L_{\tmop{max}}=26$} \\
			\hline
			{\medmuskip=0mu $\bullet \circ$} & $\frac{1}{2} (U_2 + U_4)$ & 2 \\
			{\medmuskip=0mu $\bullet \bullet \circ \circ$} & $\frac{1}{4} (U_1 + U_3 + 2 U_4)$ & 4 \\
			{\medmuskip=0mu $\bullet \bullet \bullet \circ \circ \circ$} & $\frac{1}{6} (2 U_1 + U_2 +
			U_4)$ & 6 \\
			{\medmuskip=0mu $\bullet \bullet \bullet \bullet \circ \circ \circ \circ$} & $\frac{1}{8}
			(3 U_1 + 2 U_2 + U_3)$ & 8 \\
			{\medmuskip=0mu $\bullet \bullet \circ \bullet \circ \circ \bullet \circ$} & $\frac{1}{8}
			(U_1 + 2 U_2 + 3 U_3)$ & 8 \\
			\hline
			\multicolumn{3}{c}{$p = 5, Q = 1/3, L_{\tmop{max}}=21$} \\
			\hline
			{\medmuskip=0mu $\bullet \circ \circ$} & $\frac{1}{3} U_3$ & $3$\\
			{\medmuskip=0mu $\bullet \circ \bullet \circ \circ \circ$} & $\frac{1}{6} (U_2 + U_4)$ &
			$6$\\
			{\medmuskip=0mu $\bullet \bullet \circ \circ \circ \circ$} & $\frac{1}{6} (U_1 + U_5)$ &
			$6$\\
			{\medmuskip=0mu $\bullet \bullet \circ \bullet \circ \circ \circ \circ \circ$} &
			$\frac{1}{9} (U_1 + U_2 + U_3)$ & $2 \times 9$\\
			{\medmuskip=0mu $\bullet \bullet \circ \circ \circ \bullet \circ \circ \circ$} &
			$\frac{1}{9} (U_1 + 2 U_4 + 2 U_5)$ & $9$\\
			{\medmuskip=0mu $\bullet \circ \bullet \circ \bullet \circ \circ \circ \circ$} &
			$\frac{1}{9} (2 U_2 + U_4 + U_5)$ & $9$\\
			{\medmuskip=0mu $\bullet \circ \bullet \circ \circ \bullet \circ \bullet \circ \circ \circ
			\circ$} & $\frac{1}{12} (2 U_2 + U_3 + 3 U_5)$ & $12$\\
			{\medmuskip=0mu $\bullet \bullet \circ \circ \bullet \circ \circ \circ \bullet \circ \circ
			\circ \bullet \circ \circ$} & $\frac{1}{15} (U_1 + 2 U_3 + 4 U_4)$ & $15$\\
			{\medmuskip=0mu $\bullet \bullet \bullet \circ \circ \circ \circ \circ \bullet \circ
			\bullet \circ \circ \circ \circ$} & $\frac{1}{15} (2 U_1 + 2 U_2 + U_5)$ &
			$2 \times 15$\\
			{\medmuskip=0mu $\bullet \bullet \bullet \circ \circ \circ \circ \circ \bullet \bullet
			\circ \circ \circ \circ \circ$} & $\frac{1}{15} (3 U_1 + U_2)$ & 15\\
			{\medmuskip=0mu $\bullet \bullet \circ \circ \bullet \bullet \circ \circ \circ \circ \circ
			\bullet \bullet \circ \circ \circ \circ \circ$} & $\frac{1}{18} (3 U_1 +
			U_3 + 2 U_4 + U_5)$ & $18$\\
			{\medmuskip=0mu $\bullet \bullet \bullet \circ \circ \circ \circ \circ \bullet \circ
			\bullet \circ \bullet \circ \circ \circ \circ \circ$} & $\frac{1}{18} (2
			U_1 + 3 U_2 + U_4)$ & $18$\\
			{\medmuskip=0mu $\bullet \bullet \circ \circ \bullet \circ \circ \bullet \bullet \circ
			\circ \circ \circ \circ \bullet \bullet \circ \circ \circ \circ \circ$} &
			$\frac{1}{21} (3 U_1 + 2 U_3 + 2 U_4)$ & $21$\\
			{\medmuskip=0mu $\bullet \bullet \bullet \circ \circ \circ \circ \bullet \circ \bullet
			\circ \circ \circ \circ \bullet \circ \bullet \circ \circ \circ \circ$} &
			$\frac{1}{21} (2 U_1 + 3 U_2 + 3 U_5)$ & $21$\\
			\hline \hline
		\end{tabular}
\end{center}
\end{table}

\subsection{Discussion of the results}

Our results illustrate how highly nontrivial and unpredictable the ground-state configurations are for critical densities higher than $Q=1/p$. For example, for a half-filled system ($Q=1/2$), judging only from the $p=3$ case, one would naively expect a similar trend to be present in all other cases: for all units cells to consist of a chain of occupied sites, followed by a chain of the same length, but with empty sites. However, Table\ \ref{tab:results2} shows that for $p=4$ there exists a ground state with a unit cell {\medmuskip=0mu $(\bullet\nobreak\bullet\nobreak\circ\nobreak\bullet\nobreak\circ\nobreak\circ\nobreak\bullet\nobreak\circ)$}, which does not follow this prediction. Therefore, it is very difficult to create a simple set of rules describing the ground-state properties of all the phases in the system with high critical density.

We also conclude that the number of possible CDW phases in the system grows with the maximum interaction range $p$ and the density $Q$. For example, in the system $p=5, Q=1/3$ presented in Table\ \ref{tab:results2}, there are at least 14 different CDW phases, and we expect $p = 6, Q = 1 / 4$ to contain even more.

For $t\neq 0$, we expect non-CDW phases to be present in the system. If one considers the phase diagrams from Fig.\ \ref{fig:phases}, on the interfaces between any two phases there probably are Luttinger liquid and bond-order phases, similarly to the findings of Refs.\ {\onlinecite{Schmitteckert2004}} and {\onlinecite{Mishra2011}}. Therefore, if our assumption that the number of phases grows quickly with the maximum interaction range is correct, then we can predict that for high $p$, the phase diagram consists of mainly non-CDW phases, while CDW insulators are only present when certain $U_m$ are very high. Thus, a large interaction range may imply the loss of insulating properties of the material.

\section{Summary and outlook} \label{summary}

We have studied the ground-state properties of the extended $t$-$V$ model on a lattice with a potential that does not 
necessarily satisfy Eq.\ (\ref{condU}). We have shown how to 
construct the ground state of all the Mott insulating phases at low critical 
densities, and we have calculated the ground-state unit cells of a few example cases for 
higher critical densities. Thus, we provide a description of possible 
CDW phases of the system with any interaction range and any critical 
density in the atomic limit.

One could also work beyond the atomic limit $(t>0)$, in which case
other, non-CDW phases would be present in the system. However, to find a 
simple description of all the phases while varying the interaction range 
could prove to be more difficult. 
To simulate such systems, we propose to use 
matrix product states {\cite{PerezGarcia2007,Verstraete2008}}, 
due to their recent achievements in calculations of 
lattice models using relatively low resources.

\begin{acknowledgments}
M.S. is fully funded by EPSRC, NoWNano DTC grant number EP/G03737X/1.
The author would like to thank Evgeni Burovski for his supervision
on the initial work with the generalized $t$-$V$ model.
The author would also like to express his gratitude to Neil Drummond for useful discussions and proofreading.
\end{acknowledgments}

\bibliography{LatticePRB}

\begin{thebibliography}{24}%
\makeatletter
\providecommand \@ifxundefined [1]{%
 \@ifx{#1\undefined}
}%
\providecommand \@ifnum [1]{%
 \ifnum #1\expandafter \@firstoftwo
 \else \expandafter \@secondoftwo
 \fi
}%
\providecommand \@ifx [1]{%
 \ifx #1\expandafter \@firstoftwo
 \else \expandafter \@secondoftwo
 \fi
}%
\providecommand \natexlab [1]{#1}%
\providecommand \enquote  [1]{``#1''}%
\providecommand \bibnamefont  [1]{#1}%
\providecommand \bibfnamefont [1]{#1}%
\providecommand \citenamefont [1]{#1}%
\providecommand \href@noop [0]{\@secondoftwo}%
\providecommand \href [0]{\begingroup \@sanitize@url \@href}%
\providecommand \@href[1]{\@@startlink{#1}\@@href}%
\providecommand \@@href[1]{\endgroup#1\@@endlink}%
\providecommand \@sanitize@url [0]{\catcode `\\12\catcode `\$12\catcode
  `\&12\catcode `\#12\catcode `\^12\catcode `\_12\catcode `\%12\relax}%
\providecommand \@@startlink[1]{}%
\providecommand \@@endlink[0]{}%
\providecommand \url  [0]{\begingroup\@sanitize@url \@url }%
\providecommand \@url [1]{\endgroup\@href {#1}{\urlprefix }}%
\providecommand \urlprefix  [0]{URL }%
\providecommand \Eprint [0]{\href }%
\providecommand \doibase [0]{http://dx.doi.org/}%
\providecommand \selectlanguage [0]{\@gobble}%
\providecommand \bibinfo  [0]{\@secondoftwo}%
\providecommand \bibfield  [0]{\@secondoftwo}%
\providecommand \translation [1]{[#1]}%
\providecommand \BibitemOpen [0]{}%
\providecommand \bibitemStop [0]{}%
\providecommand \bibitemNoStop [0]{.\EOS\space}%
\providecommand \EOS [0]{\spacefactor3000\relax}%
\providecommand \BibitemShut  [1]{\csname bibitem#1\endcsname}%
\let\auto@bib@innerbib\@empty
\bibitem [{\citenamefont {Giamarchi}(2003)}]{Giamarchi2003}%
  \BibitemOpen
  \bibfield  {author} {\bibinfo {author} {\bibfnamefont {T.}~\bibnamefont
  {Giamarchi}},\ }\href {https://books.google.co.uk/books?id=GVeuKZLGMZ0C}
  {\emph {\bibinfo {title} {Quantum Physics in One Dimension}}}\ (\bibinfo
  {publisher} {Oxford University Press},\ \bibinfo {address} {New York},\
  \bibinfo {year} {2003})\ \bibinfo {note} {and references therein}\BibitemShut
  {NoStop}%
\bibitem [{\citenamefont {Kohsaka}\ \emph {et~al.}(2008)\citenamefont
  {Kohsaka}, \citenamefont {Taylor}, \citenamefont {Wahl}, \citenamefont
  {Schmidt}, \citenamefont {Lee}, \citenamefont {Fujita}, \citenamefont
  {Alldredge}, \citenamefont {McElroy}, \citenamefont {Lee}, \citenamefont
  {Eisaki}, \citenamefont {Uchida}, \citenamefont {Lee},\ and\ \citenamefont
  {Davis}}]{Kohsaka2008}%
  \BibitemOpen
  \bibfield  {author} {\bibinfo {author} {\bibfnamefont {Y.}~\bibnamefont
  {Kohsaka}}, \bibinfo {author} {\bibfnamefont {C.}~\bibnamefont {Taylor}},
  \bibinfo {author} {\bibfnamefont {P.}~\bibnamefont {Wahl}}, \bibinfo {author}
  {\bibfnamefont {A.}~\bibnamefont {Schmidt}}, \bibinfo {author} {\bibfnamefont
  {J.}~\bibnamefont {Lee}}, \bibinfo {author} {\bibfnamefont {K.}~\bibnamefont
  {Fujita}}, \bibinfo {author} {\bibfnamefont {J.~W.}\ \bibnamefont
  {Alldredge}}, \bibinfo {author} {\bibfnamefont {K.}~\bibnamefont {McElroy}},
  \bibinfo {author} {\bibfnamefont {J.}~\bibnamefont {Lee}}, \bibinfo {author}
  {\bibfnamefont {H.}~\bibnamefont {Eisaki}}, \bibinfo {author} {\bibfnamefont
  {S.}~\bibnamefont {Uchida}}, \bibinfo {author} {\bibfnamefont {D.-H.}\
  \bibnamefont {Lee}}, \ and\ \bibinfo {author} {\bibfnamefont {J.~C.}\
  \bibnamefont {Davis}},\ }\href {\doibase 10.1038/nature07243} {\bibfield
  {journal} {\bibinfo  {journal} {Nature}\ }\textbf {\bibinfo {volume} {454}},\
  \bibinfo {pages} {1072} (\bibinfo {year} {2008})}\BibitemShut {NoStop}%
\bibitem [{\citenamefont {Newns}\ \emph {et~al.}(1998)\citenamefont {Newns},
  \citenamefont {Misewich}, \citenamefont {Tsuei}, \citenamefont {Gupta},
  \citenamefont {Scott},\ and\ \citenamefont {Schrott}}]{Newns1998}%
  \BibitemOpen
  \bibfield  {author} {\bibinfo {author} {\bibfnamefont {D.~M.}\ \bibnamefont
  {Newns}}, \bibinfo {author} {\bibfnamefont {J.~A.}\ \bibnamefont {Misewich}},
  \bibinfo {author} {\bibfnamefont {C.~C.}\ \bibnamefont {Tsuei}}, \bibinfo
  {author} {\bibfnamefont {A.}~\bibnamefont {Gupta}}, \bibinfo {author}
  {\bibfnamefont {B.~A.}\ \bibnamefont {Scott}}, \ and\ \bibinfo {author}
  {\bibfnamefont {A.}~\bibnamefont {Schrott}},\ }\href {\doibase
  http://dx.doi.org/10.1063/1.121999} {\bibfield  {journal} {\bibinfo
  {journal} {Appl. Phys. Lett.}\ }\textbf {\bibinfo {volume} {73}},\ \bibinfo
  {pages} {780} (\bibinfo {year} {1998})}\BibitemShut {NoStop}%
\bibitem [{\citenamefont {Inoue}\ and\ \citenamefont
  {Rozenberg}(2008)}]{Inoue2008}%
  \BibitemOpen
  \bibfield  {author} {\bibinfo {author} {\bibfnamefont {I.~H.}\ \bibnamefont
  {Inoue}}\ and\ \bibinfo {author} {\bibfnamefont {M.~J.}\ \bibnamefont
  {Rozenberg}},\ }\href {\doibase 10.1002/adfm.200800558} {\bibfield  {journal}
  {\bibinfo  {journal} {Adv. Funct. Mater.}\ }\textbf {\bibinfo {volume}
  {18}},\ \bibinfo {pages} {2289} (\bibinfo {year} {2008})}\BibitemShut
  {NoStop}%
\bibitem [{\citenamefont {Son}\ \emph {et~al.}(2011)\citenamefont {Son},
  \citenamefont {Rajan}, \citenamefont {Stemmer},\ and\ \citenamefont
  {James~Allen}}]{Son2011}%
  \BibitemOpen
  \bibfield  {author} {\bibinfo {author} {\bibfnamefont {J.}~\bibnamefont
  {Son}}, \bibinfo {author} {\bibfnamefont {S.}~\bibnamefont {Rajan}}, \bibinfo
  {author} {\bibfnamefont {S.}~\bibnamefont {Stemmer}}, \ and\ \bibinfo
  {author} {\bibfnamefont {S.}~\bibnamefont {James~Allen}},\ }\href@noop {}
  {\bibfield  {journal} {\bibinfo  {journal} {J. Appl. Phys.}\ }\textbf
  {\bibinfo {volume} {110}},\ \bibinfo {eid} {084503} (\bibinfo {year}
  {2011})}\BibitemShut {NoStop}%
\bibitem [{\citenamefont {Nakano}\ \emph {et~al.}(2012)\citenamefont {Nakano},
  \citenamefont {Shibuya}, \citenamefont {Okuyama}, \citenamefont {Hatano},
  \citenamefont {Ono}, \citenamefont {Kawasaki}, \citenamefont {Iwasa},\ and\
  \citenamefont {Tokura}}]{Nakano2012}%
  \BibitemOpen
  \bibfield  {author} {\bibinfo {author} {\bibfnamefont {M.}~\bibnamefont
  {Nakano}}, \bibinfo {author} {\bibfnamefont {K.}~\bibnamefont {Shibuya}},
  \bibinfo {author} {\bibfnamefont {D.}~\bibnamefont {Okuyama}}, \bibinfo
  {author} {\bibfnamefont {T.}~\bibnamefont {Hatano}}, \bibinfo {author}
  {\bibfnamefont {S.}~\bibnamefont {Ono}}, \bibinfo {author} {\bibfnamefont
  {M.}~\bibnamefont {Kawasaki}}, \bibinfo {author} {\bibfnamefont
  {Y.}~\bibnamefont {Iwasa}}, \ and\ \bibinfo {author} {\bibfnamefont
  {Y.}~\bibnamefont {Tokura}},\ }\href {\doibase 10.1038/nature11296}
  {\bibfield  {journal} {\bibinfo  {journal} {Nature}\ }\textbf {\bibinfo
  {volume} {487}},\ \bibinfo {pages} {459} (\bibinfo {year}
  {2012})}\BibitemShut {NoStop}%
\bibitem [{\citenamefont {Haldane}(1981)}]{Haldane1981}%
  \BibitemOpen
  \bibfield  {author} {\bibinfo {author} {\bibfnamefont {F.~D.~M.}\
  \bibnamefont {Haldane}},\ }\href {\doibase 10.1103/PhysRevLett.47.1840}
  {\bibfield  {journal} {\bibinfo  {journal} {Phys. Rev. Lett.}\ }\textbf
  {\bibinfo {volume} {47}},\ \bibinfo {pages} {1840} (\bibinfo {year}
  {1981})}\BibitemShut {NoStop}%
\bibitem [{\citenamefont {Cazalilla}(2004)}]{Cazalilla2003}%
  \BibitemOpen
  \bibfield  {author} {\bibinfo {author} {\bibfnamefont {M.~A.}\ \bibnamefont
  {Cazalilla}},\ }\href {http://stacks.iop.org/0953-4075/37/i=7/a=051}
  {\bibfield  {journal} {\bibinfo  {journal} {J. Phys. B}\ }\textbf {\bibinfo
  {volume} {37}},\ \bibinfo {pages} {S1} (\bibinfo {year} {2004})}\BibitemShut
  {NoStop}%
\bibitem [{\citenamefont {Cazalilla}\ \emph {et~al.}(2011)\citenamefont
  {Cazalilla}, \citenamefont {Citro}, \citenamefont {Giamarchi}, \citenamefont
  {Orignac},\ and\ \citenamefont {Rigol}}]{Cazalilla2011}%
  \BibitemOpen
  \bibfield  {author} {\bibinfo {author} {\bibfnamefont {M.~A.}\ \bibnamefont
  {Cazalilla}}, \bibinfo {author} {\bibfnamefont {R.}~\bibnamefont {Citro}},
  \bibinfo {author} {\bibfnamefont {T.}~\bibnamefont {Giamarchi}}, \bibinfo
  {author} {\bibfnamefont {E.}~\bibnamefont {Orignac}}, \ and\ \bibinfo
  {author} {\bibfnamefont {M.}~\bibnamefont {Rigol}},\ }\href {\doibase
  10.1103/RevModPhys.83.1405} {\bibfield  {journal} {\bibinfo  {journal} {Rev.
  Mod. Phys.}\ }\textbf {\bibinfo {volume} {83}},\ \bibinfo {pages} {1405}
  (\bibinfo {year} {2011})}\BibitemShut {NoStop}%
\bibitem [{\citenamefont {Bockrath}\ \emph {et~al.}(1999)\citenamefont
  {Bockrath}, \citenamefont {Cobden}, \citenamefont {Lu}, \citenamefont
  {Rinzler}, \citenamefont {Smalley}, \citenamefont {Balents},\ and\
  \citenamefont {McEuen}}]{Bockrath1998}%
  \BibitemOpen
  \bibfield  {author} {\bibinfo {author} {\bibfnamefont {M.}~\bibnamefont
  {Bockrath}}, \bibinfo {author} {\bibfnamefont {D.~H.}\ \bibnamefont
  {Cobden}}, \bibinfo {author} {\bibfnamefont {J.}~\bibnamefont {Lu}}, \bibinfo
  {author} {\bibfnamefont {A.~G.}\ \bibnamefont {Rinzler}}, \bibinfo {author}
  {\bibfnamefont {R.~E.}\ \bibnamefont {Smalley}}, \bibinfo {author}
  {\bibfnamefont {L.}~\bibnamefont {Balents}}, \ and\ \bibinfo {author}
  {\bibfnamefont {P.~L.}\ \bibnamefont {McEuen}},\ }\href {\doibase
  10.1038/17569} {\bibfield  {journal} {\bibinfo  {journal} {Nature}\ }\textbf
  {\bibinfo {volume} {397}},\ \bibinfo {pages} {598} (\bibinfo {year}
  {1999})}\BibitemShut {NoStop}%
\bibitem [{\citenamefont {Chang}\ \emph {et~al.}(1996)\citenamefont {Chang},
  \citenamefont {Pfeiffer},\ and\ \citenamefont {West}}]{Chang1996}%
  \BibitemOpen
  \bibfield  {author} {\bibinfo {author} {\bibfnamefont {A.~M.}\ \bibnamefont
  {Chang}}, \bibinfo {author} {\bibfnamefont {L.~N.}\ \bibnamefont {Pfeiffer}},
  \ and\ \bibinfo {author} {\bibfnamefont {K.~W.}\ \bibnamefont {West}},\
  }\href {\doibase 10.1103/PhysRevLett.77.2538} {\bibfield  {journal} {\bibinfo
   {journal} {Phys. Rev. Lett.}\ }\textbf {\bibinfo {volume} {77}},\ \bibinfo
  {pages} {2538} (\bibinfo {year} {1996})}\BibitemShut {NoStop}%
\bibitem [{\citenamefont {Chang}(2003)}]{Chang2003}%
  \BibitemOpen
  \bibfield  {author} {\bibinfo {author} {\bibfnamefont {A.~M.}\ \bibnamefont
  {Chang}},\ }\href {\doibase 10.1103/RevModPhys.75.1449} {\bibfield  {journal}
  {\bibinfo  {journal} {Rev. Mod. Phys.}\ }\textbf {\bibinfo {volume} {75}},\
  \bibinfo {pages} {1449} (\bibinfo {year} {2003})}\BibitemShut {NoStop}%
\bibitem [{\citenamefont {Schneider}\ \emph {et~al.}(2012)\citenamefont
  {Schneider}, \citenamefont {Porras},\ and\ \citenamefont
  {Schaetz}}]{Schneider2012}%
  \BibitemOpen
  \bibfield  {author} {\bibinfo {author} {\bibfnamefont {C.}~\bibnamefont
  {Schneider}}, \bibinfo {author} {\bibfnamefont {D.}~\bibnamefont {Porras}}, \
  and\ \bibinfo {author} {\bibfnamefont {T.}~\bibnamefont {Schaetz}},\ }\href
  {http://stacks.iop.org/0034-4885/75/i=2/a=024401} {\bibfield  {journal}
  {\bibinfo  {journal} {Rep. Prog. Phys.}\ }\textbf {\bibinfo {volume} {75}},\
  \bibinfo {pages} {024401} (\bibinfo {year} {2012})}\BibitemShut {NoStop}%
\bibitem [{\citenamefont {G\'omez-Santos}(1993)}]{GomezSantos1993}%
  \BibitemOpen
  \bibfield  {author} {\bibinfo {author} {\bibfnamefont {G.}~\bibnamefont
  {G\'omez-Santos}},\ }\href {\doibase 10.1103/PhysRevLett.70.3780} {\bibfield
  {journal} {\bibinfo  {journal} {Phys. Rev. Lett.}\ }\textbf {\bibinfo
  {volume} {70}},\ \bibinfo {pages} {3780} (\bibinfo {year}
  {1993})}\BibitemShut {NoStop}%
\bibitem [{\citenamefont {Schmitteckert}\ and\ \citenamefont
  {Werner}(2004)}]{Schmitteckert2004}%
  \BibitemOpen
  \bibfield  {author} {\bibinfo {author} {\bibfnamefont {P.}~\bibnamefont
  {Schmitteckert}}\ and\ \bibinfo {author} {\bibfnamefont {R.}~\bibnamefont
  {Werner}},\ }\href {\doibase 10.1103/PhysRevB.69.195115} {\bibfield
  {journal} {\bibinfo  {journal} {Phys. Rev. B}\ }\textbf {\bibinfo {volume}
  {69}},\ \bibinfo {pages} {195115} (\bibinfo {year} {2004})}\BibitemShut
  {NoStop}%
\bibitem [{\citenamefont {Mishra}\ \emph {et~al.}(2011)\citenamefont {Mishra},
  \citenamefont {Carrasquilla},\ and\ \citenamefont {Rigol}}]{Mishra2011}%
  \BibitemOpen
  \bibfield  {author} {\bibinfo {author} {\bibfnamefont {T.}~\bibnamefont
  {Mishra}}, \bibinfo {author} {\bibfnamefont {J.}~\bibnamefont
  {Carrasquilla}}, \ and\ \bibinfo {author} {\bibfnamefont {M.}~\bibnamefont
  {Rigol}},\ }\href {\doibase 10.1103/PhysRevB.84.115135} {\bibfield  {journal}
  {\bibinfo  {journal} {Phys. Rev. B}\ }\textbf {\bibinfo {volume} {84}},\
  \bibinfo {pages} {115135} (\bibinfo {year} {2011})}\BibitemShut {NoStop}%
\bibitem [{\citenamefont {Dalmonte}\ \emph {et~al.}(2015)\citenamefont
  {Dalmonte}, \citenamefont {Lechner}, \citenamefont {Cai}, \citenamefont
  {Mattioli}, \citenamefont {L\"auchli},\ and\ \citenamefont
  {Pupillo}}]{Dalmonte2015}%
  \BibitemOpen
  \bibfield  {author} {\bibinfo {author} {\bibfnamefont {M.}~\bibnamefont
  {Dalmonte}}, \bibinfo {author} {\bibfnamefont {W.}~\bibnamefont {Lechner}},
  \bibinfo {author} {\bibfnamefont {Z.}~\bibnamefont {Cai}}, \bibinfo {author}
  {\bibfnamefont {M.}~\bibnamefont {Mattioli}}, \bibinfo {author}
  {\bibfnamefont {A.~M.}\ \bibnamefont {L\"auchli}}, \ and\ \bibinfo {author}
  {\bibfnamefont {G.}~\bibnamefont {Pupillo}},\ }\href {\doibase
  10.1103/PhysRevB.92.045106} {\bibfield  {journal} {\bibinfo  {journal} {Phys.
  Rev. B}\ }\textbf {\bibinfo {volume} {92}},\ \bibinfo {pages} {045106}
  (\bibinfo {year} {2015})}\BibitemShut {NoStop}%
\bibitem [{\citenamefont {Venuti}\ and\ \citenamefont
  {Zanardi}(2015)}]{Venuti2015}%
  \BibitemOpen
  \bibfield  {author} {\bibinfo {author} {\bibfnamefont {L.~C.}\ \bibnamefont
  {Venuti}}\ and\ \bibinfo {author} {\bibfnamefont {P.}~\bibnamefont
  {Zanardi}},\ }\href@noop {} {\bibfield  {journal} {\bibinfo  {journal} {Int.
  J. Mod. Phys. B}\ }\textbf {\bibinfo {volume} {29}},\ \bibinfo {pages}
  {1530008} (\bibinfo {year} {2015})}\BibitemShut {NoStop}%
\bibitem [{\citenamefont {Szyniszewski}\ and\ \citenamefont
  {Burovski}(2015)}]{Szyniszewski2015}%
  \BibitemOpen
  \bibfield  {author} {\bibinfo {author} {\bibfnamefont {M.}~\bibnamefont
  {Szyniszewski}}\ and\ \bibinfo {author} {\bibfnamefont {E.}~\bibnamefont
  {Burovski}},\ }\href {http://stacks.iop.org/1742-6596/592/i=1/a=012057}
  {\bibfield  {journal} {\bibinfo  {journal} {J. Phys.: Conf. Ser.}\ }\textbf
  {\bibinfo {volume} {592}},\ \bibinfo {pages} {012057} (\bibinfo {year}
  {2015})}\BibitemShut {NoStop}%
\bibitem [{\citenamefont {{Wolfram Research, Inc.}}(2014)}]{Mathematica}%
  \BibitemOpen
  \bibfield  {author} {\bibinfo {author} {\bibnamefont {{Wolfram Research,
  Inc.}}},\ }\href@noop {} {\emph {\bibinfo {title} {Mathematica, Version
  10.0}}}\ (\bibinfo {address} {Champaign, IL},\ \bibinfo {year}
  {2014})\BibitemShut {NoStop}%
\bibitem [{Note1()}]{Note1}%
  \BibitemOpen
  \bibinfo {note} {Instead of using loop statements to fill in tables, it was
  found that the Mathematica's {\protect \ttfamily {Table[]}} function was more
  efficient and the time consumption scaled exactly exponentially with the
  system size.}\BibitemShut {Stop}%
\bibitem [{Note2()}]{Note2}%
  \BibitemOpen
  \bibinfo {note} {Mathematica's {\protect \ttfamily {Reduce[]}} function was
  found to give the most reliable simplification results, however it also
  needed much higher computational resources. {\protect \ttfamily {Simplify[]}}
  and {\protect \ttfamily {FullSimplify[]}} were found to be quite similar in
  resource consumption and the latter was chosen due to higher reliability for
  simplifying complicated conditions.}\BibitemShut {Stop}%
\bibitem [{\citenamefont {Perez-Garcia}\ \emph {et~al.}(2007)\citenamefont
  {Perez-Garcia}, \citenamefont {Verstraete}, \citenamefont {Wolf},\ and\
  \citenamefont {Cirac}}]{PerezGarcia2007}%
  \BibitemOpen
  \bibfield  {author} {\bibinfo {author} {\bibfnamefont {D.}~\bibnamefont
  {Perez-Garcia}}, \bibinfo {author} {\bibfnamefont {F.}~\bibnamefont
  {Verstraete}}, \bibinfo {author} {\bibfnamefont {M.~M.}\ \bibnamefont
  {Wolf}}, \ and\ \bibinfo {author} {\bibfnamefont {J.~I.}\ \bibnamefont
  {Cirac}},\ }\href@noop {} {\bibfield  {journal} {\bibinfo  {journal} {Quantum
  Inf. Comput.}\ }\textbf {\bibinfo {volume} {7}},\ \bibinfo {pages} {401}
  (\bibinfo {year} {2007})}\BibitemShut {NoStop}%
\bibitem [{\citenamefont {Verstraete}\ \emph {et~al.}(2008)\citenamefont
  {Verstraete}, \citenamefont {Murg},\ and\ \citenamefont
  {Cirac}}]{Verstraete2008}%
  \BibitemOpen
  \bibfield  {author} {\bibinfo {author} {\bibfnamefont {F.}~\bibnamefont
  {Verstraete}}, \bibinfo {author} {\bibfnamefont {V.}~\bibnamefont {Murg}}, \
  and\ \bibinfo {author} {\bibfnamefont {J.}~\bibnamefont {Cirac}},\
  }\href@noop {} {\bibfield  {journal} {\bibinfo  {journal} {Adv. Phys.}\
  }\textbf {\bibinfo {volume} {57}},\ \bibinfo {pages} {143} (\bibinfo {year}
  {2008})}\BibitemShut {NoStop}%
\end{thebibliography}%

\end{document}